\documentclass[letterpaper, 10 pt, conference]{ieeeconf}  

\IEEEoverridecommandlockouts                              
\usepackage{comment}
\usepackage{amsmath, amssymb, amsfonts,bm}
\usepackage{mathrsfs}
\usepackage[mathcal]{euscript}

\usepackage{enumerate}
\usepackage{mathbbol}

\usepackage{float}

\usepackage{xcolor,graphicx,epstopdf}
\usepackage{dsfont}
\usepackage{hyperref}
\usepackage{caption}
\usepackage{subcaption}
\usepackage{cite} 

\usepackage{tikz-cd}
\usepackage{mathtools}

\usepackage[subfigure]{tocloft}
\usepackage{graphicx}
\usepackage{float}
\usepackage{epsfig}

\usepackage{algorithm}

\usepackage{tikz}
\usepackage{pgfplots}
\usetikzlibrary{positioning,calc}

\usepackage{xcolor}
\definecolor{midnightblue}{rgb}{0, 0.14,0.55}
\usepackage[normalem]{ulem}

\newcommand*\mcapinn[2]{\vcenter{\hbox{$\mathsurround=0pt
			\ifx\displaystyle#1\textstyle\else#1\fi\bigcap$}}}

\newcommand*\mcupinn[2]{\vcenter{\hbox{$\mathsurround=0pt
			\ifx\displaystyle#1\textstyle\else#1\fi\bigcup$}}}

\newtheorem{theorem}{Theorem}
\newtheorem{definition}{Definition}

\newtheorem{remark}{Remark}

\title{\LARGE \bf
	An Electricity Market with Reactive Power Trading: Incorporating Dynamic Operating Envelopes
}

\author{Zeinab Salehi, Elizabeth L. Ratnam, Yijun Chen, Ian R. Petersen, Guodong Shi, and Duncan S. Callaway
	\thanks{This work was supported by the Australian Research Council under grants DP190102158, DP190103615, LP210200473, and DP230101014.}  
	\thanks{Zeinab Salehi and Ian R. Petersen are with the School of Engineering, Australian National University, Canberra, ACT 2601, Australia (e-mail:  zeinab.salehi@anu.edu.au; ian.petersen@anu.edu.au).}
\thanks{Elizabeth L. Ratnam is with the Department of Electrical and Computer Systems Engineering, Monash University, Melbourne, VIC 3168, Australia (e-mail:  liz.ratnam@monash.edu).}
\thanks{Yijun Chen is with the Department of Electrical and Electronic Engineering,  University of Melbourne,  VIC 3052, Australia (e-mail:  yijun.chen.1@unimelb.edu.au).}
\thanks{Guodong Shi is with the  Australian Centre for Robotics, School of Aerospace, Mechanical and Mechatronic Engineering, The University of Sydney, NSW 2050, Australia (e-mail: guodong.shi@sydney.edu.au).}
\thanks{Duncan S. Callaway is with the Energy and Resources Group, University of California, Berkeley, CA 94720 USA (e-mail: dcal@berkeley.edu).}
}

\begin{document}

	\maketitle
	\thispagestyle{empty}
	\pagestyle{empty}

	\begin{abstract}
		Electricity market design that accounts for grid constraints such as voltage and thermal limits at the distribution level can increase opportunities for the grid integration of Distributed Energy Resources (DERs). In this paper, we consider rooftop solar backed by battery storage connected to a distribution grid. We design an electricity market to support customers sharing rooftop generation in excess of their energy demand, where customers earn a profit through peer-to-peer (P2P) energy trading. Our proposed electricity market also incorporates P2P reactive power trading to improve the voltage  profile across a distribution feeder. We formulate the electricity market as an optimization-based problem, where voltage and thermal limits across a feeder are managed through the assignment of customer-specific dynamic operating envelopes (DOEs). The electricity market equilibrium is referred to as a competitive equilibrium, which is equivalent to a Nash equilibrium in a standard game. Our proposed market design is benchmarked using the IEEE 13-node test feeder.
	\end{abstract}

\section{Introduction}
Power systems are transitioning from centralized fossil fuel-based generation to renewable-based operations --- with a more geographically distributed energy mix. An element of the energy transition is the recent rapid uptake of distributed energy resources (DERs), including rooftop solar backed by battery storage systems, and electric vehicles (EVs). As electricity end-users become equipped with DERs, providing the means to both consume and generate electricity (i.e., an energy prosumer), new opportunities arise to participate in electricity markets \cite{azim2023dynamic, salehi2023competitive}. To efficiently operate distribution networks with growing populations of prosumers, new market mechanisms that support both energy trading and voltage regulation at the distribution-level are required \cite{meng2024holistic}.

An energy trading market operator in a competitive market determines the price for electricity, and sets the price according to the law of supply and demand \cite{arrow1954existence, mas1995microeconomic, salehi2024competitive}. In a competitive energy market, prosumers are considered small relative to the market size, and accordingly, individual decisions of prosumers do not influence the price \cite{mas1995microeconomic, salehi2022social}. Once the market operator sets the electricity price, each prosumer seeks to maximize their payoff \cite{alfaverh2023dynamic}, a combination of the cost of electricity consumption and income from electricity trading. The market reaches a \textit{competitive equilibrium} when no participant has an incentive to change their decision, and total supply matches total demand \cite{li2020transactive, salehi2022infinite, salehi2021social}. 

To efficiently operate an electric grid, physics-based constraints, including voltage and thermal limits, must also be considered in the design of a new market mechanism. Distribution Network Service Providers (DNSPs) typically impose static limits on customer imports and exports to ensure feeder-level voltage and thermal limits remain within permissible operating envelopes. However, static import and export limits are conservative, as they do not reflect time-varying grid conditions \cite{lankeshwara2023time}. Accordingly, DNSPs are increasingly adopting flexible grid connection agreements with prosumers, supporting operations with Dynamic Operating Envelopes (DOEs) \cite{Report2}. Specifically, a DOE is computed by a DNSP and is communicated to a prosumer --- providing a range of permissible active and reactive power injections (or demand) at a given time \cite{mahmoodi2023capacity}. 

Ancillary services are another element of an electricity market, improving the reliability and security of the electricity supply.  Ancillary services have typically been provided by large, centralized generators. As we transition to renewable-based grid operations, ancillary services are increasingly being delivered by a wider range of participants, including prosumers with DERs \cite{li2025energy}. Key ancillary services
include frequency regulation \cite{li2024incentive}, voltage support \cite{meng2024holistic}, and black-start capability \cite{aguilar2022exploring}. That is, within DOE limits, prosumers can contribute to both energy trading and ancillary services. However, limited DOE implementation and insufficient incentives for prosumer-based reactive power regulation potentially reduce the efficiency of existing market operations \cite{jiang2025bargaining, zou2024inverter}. 

{In our prior work \cite{salehi2026social}, we designed peer-to-peer (P2P) energy markets with active power trading in the absence of DOEs. We showed that the associated market equilibrium is a competitive equilibrium that maximizes social welfare, defined as the sum of the utilities of all customers.} In this paper, {we extend \cite{salehi2026social} to enable both active and reactive power trading among prosumers while achieving a balance between electricity supply and demand.  In addition,} we ensure that grid constraints are satisfied by localizing feeder-level grid constraints into prosumer-level DOE constraints.  To improve the overall utilization of the feeder, we design a market-mechanism whereby prosumers trade any excess DOE capacity, which would otherwise remain unused. 
Finally, we show that the proposed markets can be represented as a standard game, where prosumers are the players and a fictitious price-setting player determines the price, with the solution concept being a Nash equilibrium.
{Our preliminary work \cite{Salehi_Uniform} on real power trading with DOEs is also extended in this paper to include both real and reactive power trading}.

The rest of the paper is organized as follows. Section \ref{sec:preliminaries} introduces the prosumer dynamics and DOE constraints. Section \ref{sec:reactive} formulates the electricity market with reactive power trading to operate within feeder-level constraints. Section \ref{sec:Nash} investigates the relation of the proposed market with the Nash equilibrium of a standard game. Section \ref{sec:Numerical Results} presents numerical simulation results and Section \ref{sec:conclusion} concludes the paper.

\section{Preliminaries} \label{sec:preliminaries}

We consider an electrical distribution feeder with a radial topology connecting 
$N$ prosumers, each equipped with rooftop solar backed by battery storage. We consider the case where the electrical feeder operates in an island mode to support the blackstart capability of the bulk grid, i.e., all demand is met through local generation and P2P trading. We assume each prosumer has a home energy management system (EMS) that manages the rate of electricity consumption and energy trading according to the preferences of the prosumer. Within the EMS, the prosumer preferences are formulated into mathematical expressions termed the payoff, which is defined as the sum of the consumption utility (satisfaction level) and the income or expenditure from energy trading.

Each prosumer $i$, indexed in the set $\mathcal{N} = \{1, \dots, N\}$, has uncontrollable loads such as lighting, and controllable loads such as EVs and home battery storage systems. The net supply of each prosumer $i$ is defined as the difference between the rooftop solar generation and uncontrollable load and is denoted by $a_i(t) \in \mathbb{R}$ (kW). Controllable loads are modeled by linear difference equations 
\begin{equation}\label{eq_state}
	\mathbf x_i(t+1) = \mathbf A_i \mathbf x_i(t) + \mathbf B_i \mathbf u_i(t), \quad t \in \mathcal{T},
\end{equation}
where $\mathbf x_i(t) \in \mathbb{R}^n$ is the dynamical state (e.g., the state of charge (SoC) of a battery), $\mathbf x_i(0) \in \mathbb R^n$ is the initial state (e.g., the initial SoC of a battery), and $\mathbf u_i(t) \in \mathbb R^m$ is the control input (e.g., the charge/discharge rate of a battery). Also, $\mathbf A_i \in \mathbb R^{n \times n}$ and $\mathbf B_i \in \mathbb R^{n \times m}$ are fixed matrices. States and control inputs are constrained by 
$	\underline{\mathbf x}_i \leq \mathbf x_i(t) \leq \overline{\mathbf x}_i$ and $  \underline{\mathbf u}_i\leq \mathbf u_i(t) \leq \overline{\mathbf u}_i$, 
where $\underline{\mathbf x}_i \in \mathbb{R}^{n}$ and $\underline{\mathbf u}_i \in \mathbb{R}^{m}$ denote lower bounds, and $\overline{\mathbf x}_i \in \mathbb{R}^{n}$ and $\overline{\mathbf u}_i \in \mathbb{R}^{m}$  denote upper bounds.  

Prosumer preferences for managing controllable loads are expressed by  utility functions $f_i(\mathbf x_i(t), \mathbf u_i(t)): \mathbb{R}^{n} \times \mathbb{R}^{m} \mapsto \mathbb{R}$ that represent their satisfaction as a result of taking the control action $\mathbf u_i(t)$ and reaching the state $\mathbf x_i(t)$. For example, if the SoC of the battery is close to $80 \%$ and the number of charge/discharge cycles are relatively low, then the prosumer potentailly achieves a high level of satisfaction and a high $f_i(\cdot)$. Similarly, the terminal utility of a prosumer at the terminal time step $T$ is denoted by $\phi_i(\mathbf x_i(T)): \mathbb{R}^{n} \mapsto \mathbb{R}$, which depends on their satisfaction as a result of reaching the terminal state $\mathbf x_i(T)$. 
The average active power consumed or withdrawn as a result of the control action $\mathbf u_i(t)$ is denoted by $h_i(\mathbf u_i(t)): \mathbb{R}^{m} \mapsto \mathbb{R}$ (kW). At each time step $t$, if an agent $i$ has a supply shortage, i.e., $a_i(t) - h_i(\mathbf u_i(t))<0$, it can purchase the deficit from others; conversely, if it has a surplus, it can sell it to others through a trading decision variable $p_i(t) \in \mathbb{R}$ (kW) such that $p_i(t) \leq a_i(t) - h_i(\mathbf u_i(t))$\footnote{This inequality applies to both surplus and deficit cases.}. Active power injection is also constrained by inverter limits such that $\underline{ p}_i \leq  p_i(t) \leq \overline{ p}_i$, where  $\underline{ p}_i$ and  $\overline{ p}_i$ denote the lower and upper bound inverter limits, respectively.

Behind-the-meter inverters with reactive power regulation capability can support voltage regulation grid services. The injected or absorbed reactive power by prosumer $i$ is denoted by $q_i(t) \in \mathbb R$ (kVar) such that $	\underline{q}_i \leq q_i(t) \leq \overline{ q}_i$, where $\underline{q}_i$ and $\overline{ q}_i$ represent the lower and upper bound inverter limits, respectively. All inverter-based voltage regulation across a distribution feeder must comply with physics-based grid limits.
\subsection{Electric Grid  Constraints}\label{section:microgrid_model}
Denote $\mathbf p(t)=(p_1(t), \dots, p_N(t))^\top$ and $\mathbf q(t)=(q_1(t), \dots, q_N(t))^\top$ as the vectors of active and reactive power injections of all prosumers at time step $t \in \mathcal{T}$, respectively. We represent grid constraints in the separable form {as done in} \cite{mahmoodi2023capacity}
\begin{equation}\label{eq:constraints}
    \mathbf F_t(\mathbf p(t), \mathbf q(t))=\sum_{i=1}^N \mathbf g_{it}( p_i(t),  q_i(t)) \leq \boldsymbol \nu(t), \quad t \in \mathcal{T},
\end{equation}
where $\mathbf F_t: \mathbb{R}^{N} \times \mathbb{R}^{N} \mapsto \mathbb{R}^M$ is an affine vector function, and $M$ denotes the total number of grid constraints at time step $t \in \mathcal{T}$. Each function $\mathbf g_{it}: \mathbb{R}\times \mathbb R \mapsto \mathbb{R}^M$ is a local affine vector function associated with prosumer $i$ capturing its contribution to the $M$ global grid constraints at time step $t \in \mathcal{T}$. These functions are determined by grid characteristics such as line impedances and topology.
The vector $\boldsymbol \nu(t) \in \mathbb{R}^M$ specifies the bounds on the grid constraints, derived by the physical and operational limits of the network. Two examples of constraints expressed in the form of \eqref{eq:constraints} include voltage and thermal limits \cite{mahmoodi2023capacity}. Details of voltage constraints are provided in \cite{mahmoodi2023capacity, Salehi_Uniform}.

Enforcing the grid constraints in \eqref{eq:constraints} requires assigning feasible operating limits to individual prosumers. 
Static export limits or fixed power factor settings are often too conservative, as they fail to reflect the time-varying network conditions. DOEs have emerged as a promising approach to address this challenge.  
\begin{definition}[as in \cite{mahmoodi2023capacity}]
    A DOE is a nonempty, closed, and convex set $\Omega_{it}\subseteq \mathbb{R}^2$ in the $(p_{i}(t), q_i(t))$ space, that specifies an admissible region of active and reactive power injections for prosumer $i$ at time $t$, that the grid can accommodate without violating operational and physical constraints.
\end{definition}

Using the right-hand side decomposition (RHSD) method \cite{konnov2014right}, the right-hand side of \eqref{eq:constraints} can be expressed as
$    \boldsymbol \nu(t) = \sum_{i=1}^N \mathbf w_i(t)
$, where $\mathbf w_i(t) \in \mathbb{R}^M$ represents the share of the $M$ grid constraints assigned to prosumer $i$ at time step $t$. Consequently, the grid constraints in \eqref{eq:constraints} can be written as 
$
    \sum_{i=1}^N \mathbf g_{it}( p_i(t),  q_i(t)) \leq \sum_{i=1}^N \mathbf w_i(t)
$,
which, when decomposed, yields the DOEs
$
    \Omega_{it}=\big\{(p_{i}(t), q_i(t)) \in \mathbb{R}^2: \mathbf g_{it}( p_i(t),  q_i(t)) \leq \mathbf w_i(t)\big\}$
for each prosumer $i\in \mathcal{N}$ at time step $t\in \mathcal{T}$ \cite{mahmoodi2023capacity}. There are infinitely many possible choices for the RHSD and DOEs. In this paper, we focus on an allocation that maximizes the potential active power injection to the grid, by solving
\begin{equation}\label{eq:DOE_2}
		\begin{aligned}
			\max_{\mathbf w(t), \mathbf p(t), \mathbf q(t)} \quad &  \sum_{i=1}^{N} \max\{0, p_i(t)\} - \epsilon O(\mathbf w(t)) \\
			{\rm s.t.} 
			\quad &
			 \mathbf g_{it}( p_i(t),  q_i(t)) \leq \mathbf w_i(t) \\
    \quad & \sum_{i=1}^N \mathbf w_i(t) = \boldsymbol \nu(t),
    \,\,\, i \in \mathcal{N},
		\end{aligned}
	\end{equation}
where $\mathbf w(t)=(\mathbf w_1^\top(t), \dots, \mathbf w_N^\top(t))^\top$, and $\epsilon O(\mathbf w(t))$ is a regularization term with $\epsilon$ being a small positive number. The function $O(\mathbf w(t)): \mathbb{R}^{NM} \mapsto \mathbb{R}$ captures the preferences of the network service provider in allocating DOE shares among prosumers. To promote fairness, we consider $O(\mathbf w(t))=\|\mathbf w(t)- \overline{\mathbf{E}} \|$, where $\overline{\mathbf{E}} \in \mathbb R^{NM}, \overline{\mathbf{E}}=(\boldsymbol \nu^\top(t)/N, \dots, \boldsymbol \nu^\top(t)/N)^\top$ is the equality index. This choice of $O(\mathbf w(t))$ reflects that the network service provider tends to allocate close to equal DOEs to all prosumers \cite{mahmoodi2023capacity}. Solving \eqref{eq:DOE_2}, the network service provider determines the optimal DOEs and communicates them to each prosumer $i$ in the form of $\mathbf g_{it}( p_i(t),  q_i(t)) \leq \mathbf w_i^\ast(t)$.

\section{Market With Reactive Power Compensation}\label{sec:reactive}
Prosumers connected to the islanded feeder need to share their resources to meet their demand. In this section, we formulate a P2P energy market with reactive power trading that facilitates this interaction. Let $\lambda(t) \in \mathbb R$ denote the price for active power trading and $\gamma(t) \in \mathbb R$ the price for reactive power trading at time step $t$. In \cite{Salehi_Uniform}, we showed that prosumers can also trade their excess DOE limits, i.e., $\mathbf w_i^\ast(t)-\mathbf g_{it}( p_i(t),  q_i(t))$, through a decision variable $\boldsymbol l_i(t) \in \mathbb R^M$ at a price $\boldsymbol \beta(t) \in \mathbb R^M$, subject to $\boldsymbol l_i(t) \leq \mathbf w_i^\ast(t)-\mathbf g_{it}( p_i(t),  q_i(t))$, which mitigates the conservatism introduced by decomposing global grid constraints into local DOEs. In the following, we design a market that allows  real and reactive power trading as well as DOE limit trading. 

Obtaining DOE limits  $\mathbf{w}_i^\ast(t)$ from \eqref{eq:DOE_2},
 the coordinator determines equilibrium prices for active power trading $\lambda^\ast(t)$, reactive power trading $\gamma^\ast(t)$, and DOE limit trading $\boldsymbol \beta^\ast(t)$ that clear the market for $t \in \mathcal{T}$, and broadcasts them to participants. Upon receiving the equilibrium prices, prosumers engage in a P2P competitive market to maximize their payoff, defined as the sum of their utility $f_i(\mathbf x_i(t), \mathbf u_i(t))$, terminal utility $\phi_i(\mathbf x_i(T))$, and the income $\lambda (t)p_i(t)+ \gamma(t)q_i(t)+\boldsymbol \beta(t)\cdot \boldsymbol l_i(t)$ from transactions. 
 \begin{figure}[!t]
	\centering
	\includegraphics[width=3.35 in]{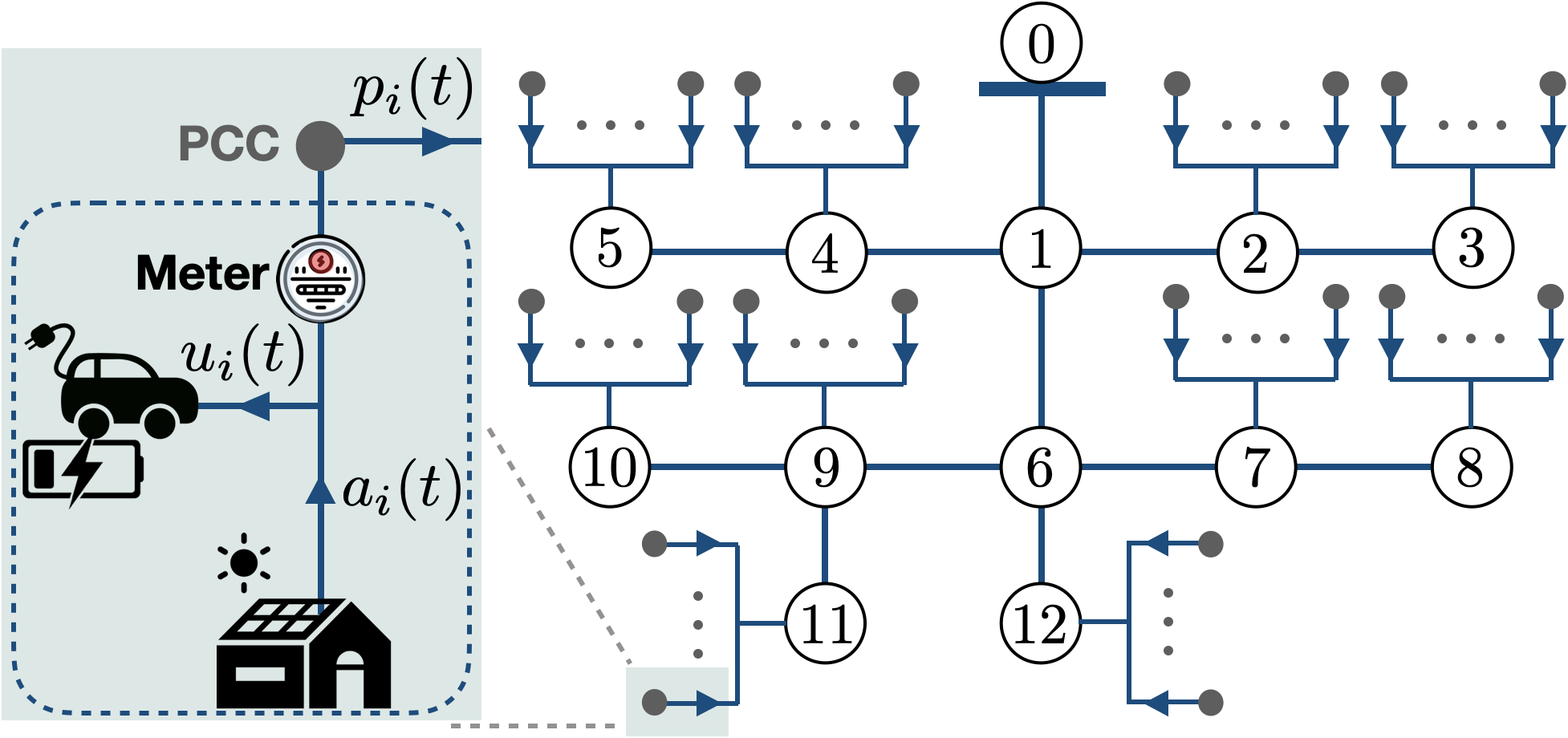}
	\caption{Modified IEEE 13-node test feeder. Left: A prosumer $i$ is zoomed in.}
	\label{fig_feeder}
\end{figure}
 
 The market reaches an equilibrium when no participant has an incentive to change their decision and total supply equals total demand, i.e., the market clears. This equilibrium is called a \textit{competitive equilibrium}, defined as follows. Let $\mathbf p_i=( p_i(0), \dots,  p_i(T-1))^\top$, $\mathbf q_i=( q_i(0), \dots,  q_i(T-1))^\top$, $\mathbf L_i=(\boldsymbol l_i^\top(0), \dots, \boldsymbol l_i^\top(T-1))^\top$, and $\mathbf U_i=(\mathbf u_i^\top(0), \dots, \mathbf u_i^\top(T-1))^\top$ denote, respectively, the vectors of active power injections,  reactive power injections, traded DOE limits, and control inputs of prosumer $i$ over the entire time horizon.  
 \begin{definition}\label{def5}
      Optimal decisions $(\mathbf u_1^\ast(t),  p_1^\ast(t), q_1^\ast(t), \\ \boldsymbol l_1^\ast(t), \dots, \mathbf u_N^\ast(t),  p_N^\ast(t), q_N^\ast(t), \boldsymbol l_N^\ast(t))$, along with optimal prices $ \lambda^\ast(t)$, $ \gamma^\ast(t)$, and $\boldsymbol \beta^\ast(t)$ for all $t \in \mathcal{T}$, form a \textit{competitive equilibrium} if the following statements hold.
     \begin{itemize}
         \item [(i)] Given $\lambda^\ast(t)$, $\gamma^\ast(t)$, $\boldsymbol \beta^\ast(t)$, and $\mathbf w_i^\ast(t)$, each prosumer $i \in \mathcal{N}$ maximizes their payoff at $(\mathbf u_i^\ast(t), p_i^\ast(t), q_i^\ast(t),  \boldsymbol l_i^\ast(t))$ for $t \in \mathcal{T}$ as a solution to       
         \begin{equation}\label{eq_competitive_7}
			\begin{aligned}
				\max_{{\mathbf U_i}, \mathbf p_i, \mathbf q_i, \mathbf L_i} \quad &  \sum_{t=0}^{T-1} \Big(f_i(\mathbf x_i(t), \mathbf u_i(t))  +   \lambda^\ast(t) p_i(t) \\ &{+\gamma^{\ast}(t)  q_i(t) +\boldsymbol  \beta^\ast(t) \cdot \boldsymbol l_i(t) }\Big) + {\phi_i(\mathbf x_i(T)}) \\
				{\rm s.t.} 
				\quad &  \mathbf x_i(t+1)= \mathbf A_i \mathbf x_i(t)+ \mathbf B_i \mathbf u_i(t),  \\
				\quad &  p_i(t) \leq a_i (t)- h_i(\mathbf u_i(t)),  \\
				\quad & {\boldsymbol l_i(t) \leq \mathbf w_i^\ast(t)-\mathbf g_{it}( p_i(t),  q_i(t))}, \\
				\quad &
    \underline{\mathbf x}_i \leq \mathbf x_i(t) \leq \overline{\mathbf x}_i, 
   \,\,\, \underline{\mathbf u}_i\leq \mathbf u_i(t) \leq \overline{\mathbf u}_i, \\
				\quad & \underline{ p}_i \leq  p_i(t) \leq \overline{ p}_i, 
   \,\,\, \underline{ q}_i\leq  q_i(t) \leq \overline{ q}_i,
				\,\,\,  t \in \mathcal{T}.
			\end{aligned}
		\end{equation}

  \item [(ii)] All traded power and all traded DOE limits are balanced at each time step; that is,
  \begin{equation}\label{equality_const}
      \sum_{i=1}^N p_i^\ast(t)=0, \,\,\, \sum_{i=1}^N q_i^\ast(t)=0, \,\,\, \sum_{i=1}^N \boldsymbol l_i^\ast(t)=0, \,\,\, t\in \mathcal{T}.
  \end{equation}
     \end{itemize}
 \end{definition}

 From microeconomics theory, a market outcome is Pareto efficient if there is no waste in the allocation of resources. Mathematically, a market outcome $(\mathbf u_1^\star(t),  p_1^\star(t), q_1^\star(t), \boldsymbol l_1^\star,  \dots, \mathbf u_N^\star(t),  p_N^\star(t), q_N^\star(t), \boldsymbol l_N^\star)$ for all $t \in \mathcal{T}$ is efficient if it maximizes the social welfare of the  society,  defined as the sum of all participant utilities $f_i(\cdot)$ and $\phi_i(\cdot)$, subject to  local and global  constraints. Such an operating point  $(\mathbf u_1^\star(t),  p_1^\star(t), q_1^\star(t), \boldsymbol l_1^\star,  \dots, \mathbf u_N^\star(t),  p_N^\star(t), q_N^\star(t), \boldsymbol l_N^\star)$ for all $t \in \mathcal{T}$ is referred to as a \textit{social welfare maximization solution} and solves
 \begin{equation}\label{eq4}
		\begin{aligned}
			\max_{{\mathbf U}, \mathbf P, \mathbf Q, \mathbf L} \quad &  \sum_{i=1}^{N}\Big(\sum_{t=0}^{T-1} f_i(\mathbf x_i(t), \mathbf u_i(t))+ {\phi_i(\mathbf x_i(T))} \Big)\\
			{\rm s.t.} 
			\quad &  \mathbf x_i(t+1)= \mathbf A_i \mathbf x_i(t)+ \mathbf B_i \mathbf u_i(t),  \\
			\quad &  p_i(t) \leq a_i (t)- h_i(\mathbf u_i(t)),   \\
				\quad & {\boldsymbol l_i(t) \leq \mathbf w_i^\ast(t)-\mathbf g_{it}( p_i(t),  q_i(t))}, \\
			\quad &	\sum_{i=1}^N p_i(t) = 0, \,\,\, 	 \sum_{i=1}^N q_i(t) = 0, 
			\,\,\, 	\sum_{i=1}^N \boldsymbol l_i(t) = 0, 
    \\
				\quad &
    \underline{\mathbf x}_i \leq \mathbf x_i(t) \leq \overline{\mathbf x}_i, 
   \,\,\, \underline{\mathbf u}_i\leq \mathbf u_i(t) \leq \overline{\mathbf u}_i, \\
				\quad & \underline{ p}_i \leq  p_i(t) \leq \overline{ p}_i, 
   \,\,\, \underline{ q}_i\leq  q_i(t) \leq \overline{ q}_i,
				\,\,\,   i \in \mathcal{N}, \,\,\, t \in \mathcal{T},
		\end{aligned}
	\end{equation}
where $\mathbf P=(\mathbf p_1^\top, \dots, \mathbf p_N^\top)^\top$, 
  $\mathbf Q=(\mathbf q_1^\top, \dots, \mathbf q_N^\top)^\top$, $\mathbf L=(\mathbf L_1^\top, \dots, \mathbf L_N^\top)^\top$, and $\mathbf U=(\mathbf U_1^\top, \dots, \mathbf U_N^\top)^\top$ denote, respectively,  the vectors of all active power injection, reactive power injection, traded DOE limits, and control inputs of all prosumers over the entire time horizon.

In the following theorem, we show that the market outcome in Definition \ref{def5} is Pareto efficient, maximizing \eqref{eq4}.

\begin{theorem} \label{theorem7}
    Suppose $f_i(\cdot)$, $\phi_i(\cdot)$, and $-h_i(\cdot)$ are concave functions for $i \in \mathcal{N}$, and Slater's condition holds for \eqref{eq_competitive_7} and \eqref{eq4}. Given feasible initial conditions $\mathbf x_i(0)$ for $i \in \mathcal{N}$, the following statements hold. 
    \begin{itemize}
        \item A competitive equilibrium in Definition \ref{def5} is equivalent to a social welfare maximization solution to \eqref{eq4}. 
        
        \item Let $-\alpha^\ast(t)$, $-\vartheta^\ast(t)$, and $-\boldsymbol \delta^\ast(t)$ be optimal dual variables corresponding to the balancing equality constraints $\sum_{i=1}^N p_i(t)=0$, $\sum_{i=1}^N q_i(t)=0$, and $\sum_{i=1}^N \boldsymbol l_i(t)=0$ in \eqref{eq4}, respectively, for $t\in \mathcal{T}$. Then,  equilibrium prices can be obtained as $\lambda^\ast(t)=\alpha^\ast(t)$, $\gamma^\ast(t)=\vartheta^\ast(t)$, and $\boldsymbol \beta^\ast(t)=\boldsymbol \delta^\ast(t)$ for $t\in \mathcal{T}$.
        \end{itemize}
\end{theorem}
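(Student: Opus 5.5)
The plan is the standard Lagrangian-decomposition argument. First observe that \eqref{eq4} is a convex program. Its objective is concave because $f_i(\cdot)$ and $\phi_i(\cdot)$ are concave. The dynamics and balancing constraints are affine. The constraint $p_i(t)\le a_i(t)-h_i(\mathbf u_i(t))$ is convex since $-h_i(\cdot)$ is concave. The DOE constraint $\boldsymbol l_i(t)+\mathbf g_{it}(p_i(t),q_i(t))\le \mathbf w_i^\ast(t)$ is affine because $\mathbf g_{it}$ is affine. The same holds for each prosumer problem \eqref{eq_competitive_7}. With Slater's condition, strong duality holds, and the KKT conditions are necessary and sufficient for optimality in both \eqref{eq4} and \eqref{eq_competitive_7}.

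Next, dualize only the three coupling constraints in \eqref{eq4}, with multipliers $-\alpha(t)$, $-\vartheta(t)$, $-\boldsymbol\delta(t)$. The partial Lagrangian is
\begin{equation*}
\mathcal L=\sum_{i=1}^N\Big(\sum_{t=0}^{T-1}\big(f_i(\mathbf x_i(t),\mathbf u_i(t))+\alpha(t)p_i(t)+\vartheta(t)q_i(t)+\boldsymbol\delta(t)\cdot\boldsymbol l_i(t)\big)+\phi_i(\mathbf x_i(T))\Big).
\end{equation*}
This Lagrangian separates across $i$. Its maximization over the remaining local constraint set is exactly the collection of problems \eqref{eq_competitive_7} with $(\lambda^\ast,\gamma^\ast,\boldsymbol\beta^\ast)=(\alpha,\vartheta,\boldsymbol\delta)$.

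For the direction from social welfare to equilibrium, take an optimal solution of \eqref{eq4} together with optimal dual variables $(\alpha^\ast,\vartheta^\ast,\boldsymbol\delta^\ast)$, which exist by Slater's condition. By strong duality and the saddle-point property, the primal optimum maximizes $\mathcal L(\cdot,\alpha^\ast,\vartheta^\ast,\boldsymbol\delta^\ast)$ over the local constraints. Equivalently, and more carefully, partition the full KKT system of \eqref{eq4} into per-prosumer blocks. Each block then coincides with the KKT system of \eqref{eq_competitive_7} at these prices, so sufficiency of KKT gives item (i) of Definition \ref{def5}. Primal feasibility of \eqref{eq4} gives \eqref{equality_const}, which is item (ii). This also proves the second bullet of the theorem.

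For the converse, take a competitive equilibrium with prices $(\lambda^\ast,\gamma^\ast,\boldsymbol\beta^\ast)$. For each $i$, Slater's condition for \eqref{eq_competitive_7} supplies KKT multipliers for the local constraints. Stack these multipliers across $i$ and set the coupling multipliers to $-\lambda^\ast(t)$, $-\gamma^\ast(t)$, $-\boldsymbol\beta^\ast(t)$. Stationarity, complementary slackness and dual feasibility for \eqref{eq4} then hold blockwise. Primal feasibility follows from the local feasibility in (i) plus \eqref{equality_const}. By sufficiency of KKT for the convex problem \eqref{eq4}, the allocation is a social welfare maximizer. Alternatively, one can argue directly: for any feasible point of \eqref{eq4}, the price terms sum to zero by balance, so each prosumer's optimality inequality summed over $i$ yields welfare dominance.

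I expect the main obstacle to be the bookkeeping needed to verify that the stacked KKT conditions match exactly. This includes the sign convention on the equality multipliers, namely that the multiplier $-\alpha$ yields the price $+\alpha$ in the payoff. It also includes handling the nonsmooth case, where $f_i$ and $h_i$ need not be differentiable, through subdifferentials. It is also essential that the local constraint sets, including the DOE constraints with fixed $\mathbf w_i^\ast(t)$, are identical in both problems, so that the separability is exact.
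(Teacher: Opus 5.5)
Your proposal is correct and follows essentially the paper's route. Your direct summation argument, where the price terms cancel by \eqref{equality_const}, is exactly the paper's ``merging'' step for equilibrium $\Rightarrow$ welfare maximization. For the reverse direction, both you and the paper separate the Lagrangian of \eqref{eq4} across prosumers and take the prices from the multipliers of the balance constraints.

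The differences are only bookkeeping. You dualize just the coupling constraints, or partition the full KKT system and invoke KKT sufficiency. The paper instead also dualizes the local $h_i$ and DOE constraints, then appeals to strong duality for each \eqref{eq_competitive_7}. Your KKT-sufficiency step makes explicit the primal feasibility and complementary slackness that the paper's final step uses only implicitly.
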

\begin{proof}
See Appendix \ref{Appendix_Theorem7}.
\end{proof}

A procedure for implementing the proposed P2P market in Definition \ref{def5} is described in Algorithm \ref{algorithm1}.
\begin{algorithm}
\caption{Implementation of the P2P Market}
    \textbf{1:} DNSP determines DOEs by solving \eqref{eq:DOE_2} and sends $\mathbf w_i^\ast(t)$ to each prosumer $i$. 
    
    \textbf{2:} DNSP computes market prices $\lambda^\ast(t)$, $\gamma^\ast(t)$, and $\boldsymbol \beta^\ast(t)$ 
          as the optimal dual variables corresponding to the balancing constraints in \eqref{eq4}, and broadcasts them to prosumers.
          
    \textbf{3:} Prosumers form a P2P market and trade resources to maximize their payoff in \eqref{eq_competitive_7}.
\label{algorithm1}
\end{algorithm}

\begin{remark}
    Although Algorithm \ref{algorithm1} employs a centralized approach in Step 2 to determine market prices, decentralized alternatives such as the Alternating Direction Method of Multipliers (ADMM) or dual decomposition can also be applied, as in \cite{nguyen2020optimal, paudel2020decentralized}. In these methods, prosumers obtain resource prices in a decentralized manner, either with or without the supervision of a central coordinator. Following a procedure similar to \cite{paudel2020decentralized}, a fully decentralized and iterative scheme based on the dual decomposition method can be established for the P2P market proposed in Definition \ref{def5}. In a fully decentralized P2P framework, prosumers independently determine resource prices, and all information exchange occurs among neighboring participants without central coordination. {The exchanged information does not include any private information about the prosumers’ utility functions.} Such approaches preserve participant privacy.
\end{remark}

\section{Nash Equilibrium}\label{sec:Nash}
In this section,  we examine the connection between the proposed  market and a standard game. We show that the market outcome in Definition \ref{def5} can be implemented as a Nash equilibrium of a standard game with $N+1$ players. 

\begin{itemize}
    \item 
The first player is an aggregator who takes the role of a price player and regulates the price such that the market clears. Given $p_i(t)$, $q_i(t)$, and $\boldsymbol l_i(t)$ for $i \in \mathcal{N}$ and $t \in \mathcal{T}$, the price player obtains  uniform prices $\lambda(t)$, $\gamma(t)$, and $\boldsymbol \beta(t)$ as a solution to
\begin{equation}\label{game2}
\resizebox{\linewidth}{!}{$
\begin{aligned}
\max_{{\boldsymbol \lambda}, {\boldsymbol \gamma}, {\boldsymbol \beta}} \quad &  \sum_{i=1}^N \sum_{t=0}^{T-1} \Big(-\lambda(t)p_i(t) -\gamma(t)q_i(t)  
- \boldsymbol \beta(t) \cdot \boldsymbol{l}_i(t) \Big),
\end{aligned} $}
\end{equation}
where $\boldsymbol \lambda=(\lambda(0), \dots, \lambda(T-1))^\top$, $\boldsymbol \gamma=(\gamma(0), \dots, \gamma(T-1))^\top$, and $\boldsymbol \beta=(\boldsymbol \beta^\top(0), \dots, \boldsymbol \beta^\top(T-1))^\top$ are the vectors of all active power, reactive power, and DOE limit prices over the entire time horizon, respectively.

\item The last $N$ players are prosumers who maximize their payoff subject to the local constraints. Given prices $\lambda(t)$, $\gamma(t)$, and $\boldsymbol{\beta}(t)$ for $t \in \mathcal{T}$, each prosumer $i \in \mathcal{N}$ obtains $(\mathbf U_i, \mathbf p_i, \mathbf q_i, \mathbf L_i)$ as a solution to
\begin{equation}\label{game1}
\begin{aligned}
				\max_{{\mathbf U_i}, \mathbf p_i, \mathbf q_i, \mathbf L_i} & \quad   \sum_{t=0}^{T-1} f_i(\mathbf x_i(t), \mathbf u_i(t)) + {\phi_i(\mathbf x_i(T)}) \\ &+ \sum_{t=0}^{T-1} \Big( \lambda(t) p_i(t)+ \gamma(t) q_i(t)+\boldsymbol  \beta(t) \cdot \boldsymbol l_i(t)\Big) \\
				{\rm s.t.} 
				\quad &  \mathbf x_i(t+1)= \mathbf A_i \mathbf x_i(t)+ \mathbf B_i \mathbf u_i(t),  \\
				\quad &  p_i(t) \leq a_i (t)- h_i(\mathbf u_i(t)),  \\
				\quad & \boldsymbol l_i(t) \leq \mathbf w_i^\ast(t)-\mathbf g_{it}( p_i(t), q_i(t)), \\
				\quad &
    \underline{\mathbf x}_i \leq \mathbf x_i(t) \leq \overline{\mathbf x}_i, 
   \,\,\, \underline{\mathbf u}_i\leq \mathbf u_i(t) \leq \overline{\mathbf u}_i, \\
				\quad & \underline{ p}_i \leq  p_i(t) \leq \overline{ p}_i, 
   \,\,\, \underline{ q}_i\leq  q_i(t) \leq \overline{ q}_i,
				\,\,\,  t \in \mathcal{T}.
			\end{aligned}
\end{equation}
\end{itemize}
The standard game in \eqref{game2}--\eqref{game1} stems from the generalized game introduced in \cite{arrow1954existence}, where the objective function of the price player indicates the ``law of supply and demand".
\begin{theorem}\label{theorem6}
    Let $(\mathbf U^\ast, \mathbf P^\ast, \mathbf Q^\ast, \mathbf L^\ast)$, along with  $ {\lambda}^\ast(t)$, $ \gamma^\ast(t)$, $ \boldsymbol \beta^\ast(t)$ for $t \in \mathcal{T}$,  be a Nash equilibrium for the standard game in \eqref{game2}--\eqref{game1}. Then  $(\mathbf U^\ast, \mathbf P^\ast, \mathbf Q^\ast, \mathbf L^\ast)$, along with  $ {\lambda}^\ast(t)$, $ \gamma^\ast(t)$, $ \boldsymbol \beta^\ast(t)$ for $t \in \mathcal{T}$, is equivalent to a competitive equilibrium as defined in Definition \ref{def5}.
\end{theorem}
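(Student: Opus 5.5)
The proof rests on comparing the two defining conditions of Definition \ref{def5} with the best-response conditions of the standard game \eqref{game2}--\eqref{game1}. At a Nash equilibrium, each prosumer's strategy $(\mathbf U_i^\ast, \mathbf p_i^\ast, \mathbf q_i^\ast, \mathbf L_i^\ast)$ is a best response to the prices $\lambda^\ast(t)$, $\gamma^\ast(t)$, $\boldsymbol\beta^\ast(t)$. Problem \eqref{game1} with these prices is literally identical to \eqref{eq_competitive_7}. Hence condition (i) of Definition \ref{def5} holds immediately. The only real work is to show that the price player's best-response condition forces the market-clearing condition (ii).

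To show this, I rewrite the price player's objective \eqref{game2} as
\begin{equation*}
-\sum_{t=0}^{T-1}\Big(\lambda(t)\textstyle\sum_i p_i^\ast(t) + \gamma(t)\sum_i q_i^\ast(t) + \boldsymbol\beta(t)\cdot \sum_i \boldsymbol l_i^\ast(t)\Big),
\end{equation*}
which is linear in the unconstrained price variables $(\boldsymbol\lambda,\boldsymbol\gamma,\boldsymbol\beta)$. A linear function on all of $\mathbb R^{T(M+2)}$ attains a maximum only if its coefficient vector is zero; otherwise it is unbounded above, by scaling the prices along the negative of the coefficient direction. Existence of a Nash equilibrium means the price player's problem has a maximizer, namely $(\boldsymbol\lambda^\ast,\boldsymbol\gamma^\ast,\boldsymbol\beta^\ast)$. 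Therefore
\begin{equation*}
\sum_i p_i^\ast(t)=0,\qquad \sum_i q_i^\ast(t)=0,\qquad \sum_i \boldsymbol l_i^\ast(t)=0
\end{equation*}
for every $t\in\mathcal T$, which is exactly \eqref{equality_const}. This gives the forward direction.

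For the converse, suppose we have a competitive equilibrium. Each prosumer's decision is optimal in \eqref{eq_competitive_7}, i.e., in \eqref{game1} at the equilibrium prices. By \eqref{equality_const}, the price player's objective vanishes identically in the prices. Every price vector, in particular the equilibrium prices, is then a maximizer, so no player can profitably deviate and the profile is a Nash equilibrium. This establishes equivalence.

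The main obstacle is the unboundedness argument: it relies on the prices being unconstrained (the paper places no sign or box restrictions on them). If prices were restricted, for instance to be nonnegative, clearing would only follow in a complementary-slackness sense. Under the game as stated, the argument goes through without needing the concavity or Slater assumptions of Theorem \ref{theorem7}. Those assumptions enter only if one additionally wants to link the equilibrium to the welfare problem \eqref{eq4}.
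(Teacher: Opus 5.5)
Your proposal is correct and follows the same route as the paper: the prosumers' best responses in \eqref{game1} coincide with \eqref{eq_competitive_7}, which gives condition (i); the price player's objective is linear in unconstrained prices, so a maximizer exists only when all the market-clearing sums vanish, which gives condition (ii); and the converse holds because that objective is then identically zero. Your unboundedness argument makes explicit the step the paper states only loosely, as the first player's objective ``having a finite solution''.
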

\begin{proof}
    A Nash equilibrium is obtained when  $\sum_{i=1}^N p_i^\ast(t)= 0$, $\sum_{i=1}^N q_i^\ast(t)= 0$, and $\sum_{i=1}^N \boldsymbol l_i^\ast(t)=0$ for $t \in \mathcal{T}$, and therefore, the objective of the first player in \eqref{game2} has a finite solution. This is equivalent to the second condition in Definition \ref{def5}. Additionally, a Nash equilibrium is obtained when the objective of the last $N$ players in \eqref{game1} is maximized, which corresponds to the first condition in Definition \ref{def5}. Consequently, it is straightforward to show that $(\mathbf U^\ast, \mathbf P^\ast, \mathbf Q^\ast, \mathbf L^\ast)$, along with  $ {\lambda}^\ast(t), {\gamma}^\ast(t),  \boldsymbol \beta^\ast(t)$ for $t \in \mathcal{T}$, satisfies all the conditions in Definition \ref{def5}. Furthermore, all competitive equilibria in Definition \ref{def5} serve as a Nash equilibrium for the game defined by \eqref{game2}--\eqref{game1}.
\end{proof}

\section{Numerical Simulations}\label{sec:Numerical Results}
In this section, we benchmark the proposed market with reactive power trading against one without reactive power participation. For simplicity, we consider voltage constraints in the electric grid; extensions to include thermal constraints are straightforward. We use a modified single-phase representation of the IEEE 13-node test feeder, as shown in Fig. \ref{fig_feeder}, where the transformer between nodes 2 and 3, the switch between nodes 6 and 7, and all capacitor banks are omitted \cite{nimalsiri2021coordinated}. All simulations are conducted in Python 3.8 using CVXPY on a MacBook Pro with an Apple M1 chip and 16 GB of memory. {In particular, the nonconvex optimization problem in \eqref{eq:DOE_2} is solved using CVXPY.}

\subsection{Simulation Setup}
Apart from the feeder node, node 1, and node 6,  each node is connected to $30$ distribution aggregators. Each aggregator manages $11$ residential prosumers, each equipped with a grid-connected EV and a home battery storage system. The net power supply of each prosumer---defined as rooftop solar generation minus uncontrollable demand---is derived from metering data provided by Ausgrid \cite{Ausgrid1}, an Australian electricity distribution network provider. This dataset, originally collected in 2012, is scaled by a factor of 2 to reflect increased solar generation and electrification levels.
 EV models are from a diverse set documented in \cite{Battery1}, with battery capacities ranging from $0$ to $75$ (kWh). The corresponding home battery storage systems are assumed to have similar capacity characteristics. 
The charge/discharge efficiencies of EVs and home battery storage systems are assumed to be $\eta=0.9$. The system is simulated over a  $24$-hour (h) time horizon starting at midnight, with a sampling time of $\Delta=0.5$ (h). 

The aggregated behavior of each aggregator $i$, incorporating both EVs and home battery storage systems of the associated $11$ prosumers, is modeled using the linear difference equation $\mathbf x_i(t+1)=\mathbf x_i(t)+\eta \mathbf u_i(t)\Delta$, where $\mathbf x_i(t) \in \mathbb R^2$ denotes the SoC vector (kWh) of the aggregated EVs and residential battery storage systems, and $\mathbf u_i(t) \in \mathbb R^2$ represents the corresponding charge/discharge rate vector (kW) at each time step $t$. In both vectors, the first element corresponds to the aggregated EVs, and the second to the aggregated home battery storage systems. 
The initial SoC of each battery is assumed to follow a uniform distribution over the interval $[0.2, 0.5]$ of its respective capacity. Home battery storage systems remain grid-connected at all times, whereas EV arrival and departure times are drawn from survey data collected by the Victorian Department of Transport in Australia \cite{Mendeley1}. Prosumers are assumed to use level-$2$ chargers with a maximum charge rate of $6.6$ (kW) \cite{nimalsiri2021coordinated}, and a corresponding  maximum discharge rate of $-6.6$ (kW). To preserve battery health, the SoC of both EVs and home battery storage systems is constrained to remain between $20\%$ and $85\%$ of their respective capacities. Accordingly, the aggregated SoC and charging/discharging rates for each aggregator $i$ are bounded by $0.2 \mathbf C_i\leq \mathbf x_i(t)\leq 0.85 \mathbf C_i$  and $-6.6 \mathbf 1 \leq \mathbf u_i(t) \leq 6.6 \mathbf 1$, respectively, where  $\mathbf C_i \in \mathbb R^2$ denotes the vector of aggregated battery capacities (kWh) and $\mathbf{1} \in \mathbb R^2$ is a vector of ones. The first and second elements of $\mathbf C_i$ correspond to the aggregated EV and home battery capacities, respectively, of the prosumers managed by aggregator $i$. Prosumers have inverters with reactive power import and export limits of $0.33$ kVar. 

In this framework, each of the 
$N=300$ aggregators functions as a decision-maker, approximating the aggregate behavior of its associated prosumers. The utility function for aggregator $i \in \mathcal{N}$ is defined as a quadratic form\footnote{The utility function can take any form, the only requirement is that it is concave.} $f_i(\mathbf x_i(t), \mathbf u_i(t))=- \vartheta_1\| \mathbf u_i(t)\|^2- \vartheta_{2,t} \|(\mathbf x_i(t)-0.85 \mathbf C_i) \cdot (1,0)^\top\|^2 - \vartheta_{3,t} \|(\mathbf x_i(t)-0.85 \mathbf C_i) \cdot (0,1)^\top\|^2$ for $t \in \mathcal{T}=\{0, \dots, 47\}$, where $\|\cdot\|$ denotes the Euclidean norm. The terminal utility is given by $\phi_i(\mathbf x_i(T))=-\vartheta_4\|\mathbf x_i(T)-0.85 \mathbf C_i\|^2$,  with  $T=48$ as the terminal time step. The  coefficients  $\vartheta_1$ ($10^{-8}$\textcent/(kW)$^2$), $\vartheta_{2,t}$ ($10^{-8}$\textcent/(kWh)$^2$), $\vartheta_{3,t}$ ($10^{-8}$\textcent/(kWh)$^2$), and $\vartheta_4$ ($10^{-8}$\textcent/(kWh)$^2$) serve as regularization weights. For each aggregator $i$, the power consumed or withdrawn by the aggregated controllable loads (EVs and home battery storage systems) is modeled as $h_i(\mathbf u_i(t))=\mathbf u_i(t) \cdot \mathbf 1$. The feeder nominal voltage is set to $V_0=1$ (p.u.). In compliance with the ANSI C84.1 standard, voltage magnitudes must remain within $\pm5 \% $ of the nominal voltage. 

\subsection{Numerical Simulations}
 Considering voltage constraints, the DNSP obtains DOEs according to \eqref{eq:DOE_2}, and sends the obtained $\mathbf w_i^\ast(t)$ to each aggregator $i$ for $i \in \{1, \dots, 300\}$. The DNSP sends the active and reactive power prices, $\lambda^\ast(t)$ and $\gamma^\ast(t)$, and DOE limit prices $\boldsymbol \beta^\ast(t)$, which could be obtained solving \eqref{eq4}, to each aggregators. Each aggregator $i$ then decides on its electricity consumption and trading  to maximize its payoff according to \eqref{eq_competitive_7}. The active and reactive power prices are depicted in Fig. \ref{fig_power_price}, and the DOE limit prices are shown in Fig. \ref{fig_DOE_price}. 

Zero and negative reactive power prices in Fig. \ref{fig_power_price} correspond to cases where reactive power is in excess in the grid and must be absorbed by inverters. Similarly, zero prices for different elements of DOE limit trading in Fig. \ref{fig_DOE_price} indicate that the associated voltage constraints remain within their boundaries, whereas nonzero prices correspond to cases where the respective voltages reach the boundary.
 \begin{figure}[!t]
	\centering
	\includegraphics[width=2.7 in]{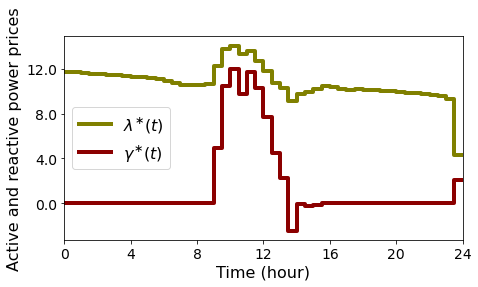}
	\caption{Active power prices $\lambda^\ast(t)$ (\textcent/kWh) and reactive power prices $\gamma^\ast(t)$ (\textcent/kVarh), over a 24-hour period (48 time steps of length $0.5$ hour).}
	\label{fig_power_price}
\end{figure}
\begin{figure}[!t]
	\centering
	\includegraphics[width=3 in]{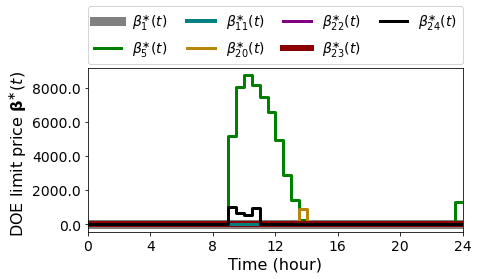}
	\caption{Some elements of the DOE limit price vector $\boldsymbol\beta^\ast(t)$ (in \textcent/(kV)$^2$) over a 24-hour period (48 time steps of length $0.5$ hour). The $k$-th element of  $\boldsymbol \beta^\ast(t)$ is denoted by $\beta^\ast_{k}(t)$. The other $17$ elements are close to zero and  not depicted.}
	\label{fig_DOE_price}
\end{figure}
 %
 %
 The nodal voltages are shown in Fig. \ref{fig_voltage} and compared with the case without reactive power injection (gray area). As observed, reactive power trading functions as an ancillary service market that improves voltage regulation, with voltages reaching their boundaries less frequently. Moreover, we observed that reactive power trading increases community profit, as the social welfare of the system rises by $1.6\%$ compared to the case without reactive power trading.
\begin{figure}[!t]
	\centering
	\includegraphics[width=3 in]{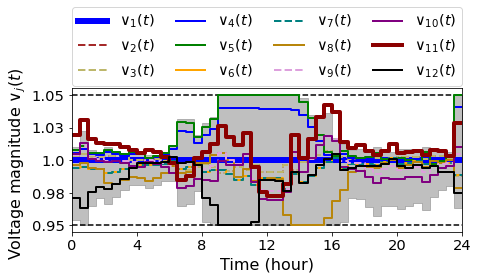}
	\caption{Nodal voltages $v_j(t)$ (p.u.) for each of the $12$ nodes, $j\in \{1, \dots, 12\}$. The gray area indicates the voltage range without reactive power trading across the grid.}
	\label{fig_voltage}
\end{figure}
\section{Conclusion}\label{sec:conclusion}
We considered a radial distribution feeder operating in island mode to support the blackstart capability of the bulk grid. We designed a peer-to-peer energy market to enable energy trading by prosumers with behind-the-meter rooftop solar backed by battery storage. Moreover, we design the market to incorporate voltage regulation across the feeder by way of inverter-based resources regulating reactive power flows to and from prosumers. Specifically, we decomposed power flow constraints across the feeder into localized customer-level limits, termed dynamic operating envelopes (DOEs). We showed that our proposed market maximizes social welfare and is equivalent to a standard game. We benchmarked our proposed market using the IEEE 13-node test feeder under binding voltage constraints. Our results demonstrate that reactive power trading functions as an ancillary service market, improving both the community’s welfare and the overall voltage profile of the feeder compared to the case without reactive power trading. Future work to consider heterogeneous prosumer utility functions is possible. 

\appendices
\section{Proof of Theorem \ref{theorem7}}\label{Appendix_Theorem7}
Let $( \mathbf U^\ast, \mathbf P^\ast, \mathbf Q^\ast, \mathbf L^\ast)$, along with ${\lambda}^\ast(t)$, ${\gamma}^\ast(t)$, and $ \boldsymbol \beta^\ast(t)$ for $t \in \mathcal{T}$,  be a competitive equilibrium. Merging optimization problems in \eqref{eq_competitive_7} for $i \in \mathcal{N}$ implies that $( \mathbf U^\ast, \mathbf P^\ast,  \mathbf Q^\ast, \mathbf L^\ast)$ maximizes the social welfare in \eqref{eq4}. 

   The proof of the reverse direction is based on strong duality, which follows from the satisfaction of Slater's condition. Considering the difference equation \eqref{eq_state}, the state $\mathbf x_i(t)$ evolves according to a linear combination of the initial state $\mathbf x_i(0)$ and the control input $\mathbf U_i$ such that
$
	\mathbf x_i(t)=\mathbf A_i^t \mathbf x_i(0)+ \sum_{j=0}^{t-1}{\mathbf A_i^{t-j-1} \mathbf B_i \mathbf u_i(j)}$ for  $t \in \{1, 2, ..., T\}
    $,
whose substitution into  $f_i(\cdot)$ and $\phi_i(\cdot)$ results in
$
		f_i(\mathbf x_i(t), \mathbf u_i(t)) = \tilde{f}_{i,t}(\mathbf U_i)$ and
$		\phi_i(\mathbf x_i(T)) = \tilde{\phi}_i(\mathbf U_i)
$.
The functions $\tilde{f}_{i,t}(\cdot)$ and $\tilde{\Phi}_i(\cdot)$ are concave as the composition of a concave function and an affine function. For $i \in \mathcal{N}$, denote $\mathbb{U}_i =\{ \mathbf U_i| \underline{\mathbf u}_i\leq \mathbf u_i(t) \leq \overline{\mathbf u}_i;  \underline{\mathbf x}_i \leq \mathbf x_i(t) \leq \overline{\mathbf x}_i; \underline{ p}_i \leq  p_i(t) \leq \overline{ p}_i; \underline{ q}_i\leq  q_i(t) \leq \overline{ q}_i, \text{for } t \in \mathcal{T}\}$,  which is a polyhedral set. For any $(\mathbf U, \mathbf P, \mathbf Q, \mathbf L)$ such that $\mathbf U_i \in \mathbb{U}_i$, the Lagrangian function of \eqref{eq4} is defined as
\begin{equation}\label{eq15}
	\begin{aligned}
	&\mathcal L(\mathbf U, \mathbf P, \mathbf Q, \mathbf L, \boldsymbol{\alpha}, \boldsymbol \vartheta, \boldsymbol{\delta}, \boldsymbol\Psi, \boldsymbol \Pi) = \\ &- \sum_{i=1}^{N}\Big(\sum_{t=0}^{T-1} \tilde{f}_{i,t}( \mathbf U_i)+\tilde{\phi}_i(\mathbf U_i) \Big) 
	+ \sum_{t=0}^{T-1}\alpha(t) \Big(-\sum_{i=1}^N p_i(t) \Big) \\&+ \sum_{t=0}^{T-1}\vartheta(t) \Big(-\sum_{i=1}^N q_i(t) \Big) 
	+ \sum_{t=0}^{T-1} \boldsymbol\delta(t) \cdot \Big(-\sum_{i=1}^N \boldsymbol l_i(t) \Big) 
	\\&+ \sum_{t=0}^{T-1}\sum_{i=1}^{N} \psi_{i}(t) \Big( p_i(t) +h_i(\mathbf u_i(t))-a_i(t)\Big),
 \\&+ \sum_{t=0}^{T-1}\sum_{i=1}^{N} \boldsymbol \pi_{i}(t) \cdot \Big( \boldsymbol l_i(t) +\mathbf g_i( p_i(t))-\mathbf w_i^\ast(t)\Big),
\end{aligned}
\end{equation}
where $\psi_{i}(t) \geq 0$, $\boldsymbol \pi_i(t) \geq 0$, $\boldsymbol \alpha =(\alpha(0), \dots, \alpha({T-1}))^\top$, $\boldsymbol \vartheta =(\vartheta(0), \dots, \vartheta({T-1}))^\top$, $\boldsymbol \delta =(\boldsymbol\delta^\top(0), \dots, \boldsymbol\delta^\top({T-1}))^\top$,  $\boldsymbol\Psi_i =(\psi_{i}(0), \dots, \psi_{i}(T-1))^\top$, and $ {\boldsymbol\Psi} = ( {\boldsymbol\Psi}_1^\top, \dots,  {\boldsymbol\Psi}_N^\top)^\top$. We define $\boldsymbol\Pi_i$ and  $\boldsymbol\Pi$  in a similar way. 
The Lagrangian  \eqref{eq15}  is separable such that
\begin{multline}\label{eq16}
	\mathcal L(\mathbf U, \mathbf P, \mathbf Q, \mathbf L, \boldsymbol{\alpha}, \boldsymbol \vartheta, \boldsymbol \delta, \boldsymbol\Psi, \boldsymbol \Pi) \\ = \sum_{i=1}^{N} \mathcal L_i(\mathbf U_i, \mathbf p_i, \mathbf q_i, \mathbf L_i, \boldsymbol{\alpha}, \boldsymbol{\vartheta}, \boldsymbol{\delta}, \boldsymbol\Psi_i, \boldsymbol \Pi_i),
\end{multline}
where
\begin{equation*}\label{eq9}
	\begin{aligned}
	\mathcal L_i &(\mathbf U_i, \mathbf p_i, \mathbf q_i, \mathbf L_i, \boldsymbol{\alpha}, \boldsymbol{\vartheta}, \boldsymbol{\delta}, \boldsymbol\Psi_i, \boldsymbol \Pi_i)
	=- \sum_{t=0}^{T-1} \Big(\tilde{f}_{i,t}(\mathbf U_i) \\& +\alpha(t)  p_i(t) + \vartheta(t)  q_i(t)  
	  + \boldsymbol \delta(t) \cdot \boldsymbol l_i(t)\Big) - \tilde{\phi}_i(\mathbf U_i) 
	\\&+  \sum_{t=0}^{T-1} \psi_{i}(t) \Big( p_i(t)+h_i(\mathbf u_i(t))-a_i(t) \Big) \\
 &+ \sum_{t=0}^{T-1}\boldsymbol \pi_{i}(t) \cdot \Big( \boldsymbol l_i(t) +\mathbf g_i( p_i(t))-\mathbf w_i^\ast(t)\Big).
\end{aligned}
\end{equation*}
Let $(\mathbf U^\star, \mathbf P^\star, \mathbf Q^\star, \mathbf L^\star)$ be an optimal primal solution to  \eqref{eq4}, and  $(-\boldsymbol{\alpha}^\ast, -\boldsymbol{\vartheta}^\ast, -\boldsymbol \delta^\ast, \boldsymbol{\Psi}^\ast , \boldsymbol{\Pi}^\ast)$ be an optimal dual solution. Given that Slater's condition is satisfied, strong duality implies 
\begin{multline}\label{eq_primal_4}
	(\mathbf U^\star, \mathbf P^\star, \mathbf Q^\star, \mathbf L^\star) \\ \in \arg \min_{\mathbf U, \mathbf P, \mathbf Q, \mathbf L} \mathcal L(\mathbf U, \mathbf P, \mathbf Q, \mathbf L, \boldsymbol{\alpha}^\ast, \boldsymbol{\vartheta}^\ast, \boldsymbol \delta^\ast, \boldsymbol\Psi^\ast, \boldsymbol\Pi^\ast),
\end{multline}
for $ \mathbf U_i \in \mathbb{U}_i$. 
Considering \eqref{eq16} and \eqref{eq_primal_4}, there holds
\begin{multline}\label{eq13}
	(\mathbf U_i^\star,  \mathbf p_i^\star, \mathbf q_i^\star, \mathbf L_i^\star) \\ \in \arg \min_{\mathbf U_i, \mathbf p_i, \mathbf q_i, \mathbf L_i} \mathcal L_i( \mathbf U_i,  \mathbf p_i, \mathbf q_i, \mathbf L_i, \boldsymbol \alpha^\ast, \boldsymbol \vartheta^\ast, \boldsymbol \delta^\ast, \boldsymbol{\Psi}_i^\ast, \boldsymbol{\Pi}_i^\ast),
\end{multline}
for $ \mathbf U_i \in \mathbb{U}_i$. 
In addition, strong duality implies
\begin{equation*}
\resizebox{\linewidth}{!}{$
 \begin{aligned}
	&(\boldsymbol{\alpha}^\ast, \boldsymbol{\vartheta}^\ast, \boldsymbol \delta^\ast, \boldsymbol{\Psi}^\ast, \boldsymbol{\Pi}^\ast) \\ &\in \arg \max_{\boldsymbol \alpha, \boldsymbol \vartheta, \boldsymbol \delta, \boldsymbol{\Psi}, \boldsymbol{\Pi}} \min_{\mathbf U, \mathbf P, \mathbf Q, \mathbf L} 	\mathcal L(\mathbf U, \mathbf P, \mathbf Q, \mathbf L, \boldsymbol{\alpha}, \boldsymbol{\vartheta}, \boldsymbol \delta, \boldsymbol\Psi, \boldsymbol\Pi) 
	\\&= \arg \max_{\boldsymbol \alpha, \boldsymbol \vartheta, \boldsymbol \delta, \boldsymbol{\Psi}, \boldsymbol{\Pi}} \min_{\mathbf U, \mathbf P, \mathbf Q, \mathbf L} 	\sum_{i=1}^{N} \mathcal L_i(\mathbf U_i, \mathbf p_i, \mathbf q_i, \mathbf L_i, \boldsymbol{\alpha}, \boldsymbol{\vartheta}, \boldsymbol{\delta}, \boldsymbol\Psi_i, \boldsymbol\Pi_i),
 \end{aligned} $}
\end{equation*} 
and, therefore, 
$
	(\boldsymbol{\Psi}_i^\ast, \boldsymbol{\Pi}_i^\ast)  \in  \arg \max_{\boldsymbol \Psi_i, \boldsymbol \Pi_i} \min_{\mathbf U_i, \mathbf p_i, \mathbf q_i, \mathbf L_i} \\ \mathcal L_i(\mathbf U_i, \mathbf p_i, \mathbf q_i, \mathbf L_i, \boldsymbol{\alpha}^\ast, \boldsymbol{\vartheta}^\ast, \boldsymbol{\delta}^\ast, \boldsymbol\Psi_i, \boldsymbol\Pi_i),
    $
where the primal and dual variables are in their respective domains. Considering  $\lambda^\ast(t)=\alpha^\ast(t)$, $\gamma^\ast(t)=\vartheta^\ast(t)$, and $\boldsymbol \beta^\ast(t)=\boldsymbol \delta^\ast(t)$ for $t \in \mathcal{T}$, the function $ \mathcal L_i(\mathbf U_i, \mathbf p_i,  \mathbf q_i, \mathbf L_i, \boldsymbol{\alpha}^\ast, \boldsymbol{\vartheta}^\ast, \boldsymbol{\delta}^\ast, \boldsymbol\Psi_i, \boldsymbol\Pi_i)$ is the Lagrangian  of \eqref{eq_competitive_7}. 
Therefore, $(\mathbf \Psi_i^\ast, \mathbf \Pi_i^\ast)$ is an optimal dual solution of \eqref{eq_competitive_7}.  Given that
Slater's condition is satisfied, strong duality implies that according to  \eqref{eq13}, $(\mathbf U^\star, \mathbf P^\star, \mathbf Q^\star, \mathbf L^\star)$ is an optimal primal solution of \eqref{eq_competitive_7}  which satisfies \eqref{equality_const}. 
Consequently, $( \mathbf U^\star, \mathbf P^\star, \mathbf Q^\star, \mathbf L^\star)$, along with $\lambda^\ast(t)=\alpha^\ast(t)$, $\gamma^\ast(t)=\vartheta^\ast(t)$, and $\boldsymbol \beta^\ast(t)=\boldsymbol \delta^\ast(t)$ for $t \in \mathcal{T}$, forms a competitive equilibrium.


\end{document}